\newcommand\myeq{\stackrel{\mathclap{\normalfont\mbox{def}}}{=}}
\title{\Huge$\,$\\[-2.75ex]
{A New Algebraic Approach for String Reconstruction from Substring Compositions}\\[0.50ex]}
\author{
\IEEEauthorblockN{
    Utkarsh Gupta and
	Hessam Mahdavifar
	}\\
\IEEEauthorblockA{\small 
Department of Electrical Engineering and Computer Science, University of Michigan, 
Ann Arbor, MI 48109, USA
} \\
\IEEEauthorblockA{\small Emails: utkarshg@umich.edu,  hessam@umich.edu}\\[-2mm]
\vspace{-3mm}

\thanks{ An earlier version of this paper was presented in part at the 2022 IEEE International Symposium on Information Theory [DOI: 10.1109/ISIT50566.2022.9834531]. This work was supported by the National Science Foundation under grant CCF--1941633.}
}
\newtheorem{theorem}{{Theorem}}
\newtheorem{lemma}[theorem]{{Lemma}}
\newtheorem{proposition}[theorem]{{Proposition}}
\newtheorem{corollary}[theorem]{{Corollary}}
\newtheorem{remark}{Remark}
\newtheorem{definition}{{Definition}}
\DeclareMathAlphabet{\mathbfsl}{OT1}{ppl}{b}{it} 
\newcommand*{\rom}[1]{\expandafter\romannumeral #1}
\newcommand{\AlignFootnote}[1]{%
  \ifmeasuring@
  \else
    \iffirstchoice@
      \footnote{#1}%
    \fi
  \fi}
\newcommand{\be}[1]{\begin{equation}\label{#1}}
\newcommand{\ee}{\end{equation}}
\renewcommand{\le}{\leqslant} 
\renewcommand{\ge}{\geqslant}
\renewcommand{\Bbb}{\mathbb}
\newcommand{\Cref}[1]{Co\-ro\-lla\-ry\,\ref{#1}}
\newcommand{\Fn}{\smash{\Bbb{F}_{\!2}^{\hspace{1pt}n}}}
\begin{document}

\maketitle
\thispagestyle{empty} 
\pagestyle{plain}    

\begin{abstract}
In this paper, we propose a new algorithm for the problem of string reconstruction from its substring composition multiset. Motivated by applications in polymer-based data storage for recovering strings from tandem mass-spectrometry sequencing, the algorithm exploits the equivalent polynomial formulation of the problem. We characterize sufficient conditions for a length $n$ binary string that guarantee the string's reconstruction time complexity to be bounded polynomially as $O(n^2)$. This improves the time complexity of the reconstruction process compared to the $O(n^2 \log{n})$ complexity of the algorithm by Acharya \textit{et al}. for this problem \cite{acharya2015string}. Moreover, the sufficient conditions on binary strings that guarantee reconstruction in polynomial time are more general than the conditions for the algorithm by Acharya \textit{et al}. This is used to construct new codebooks of \textit{reconstruction codes} that have efficient encoding procedures, and are larger, by at least a linear factor in size, compared to the previously best known construction by Pattabiraman \textit{et al.} \cite{pattabiraman2023coding}. 
\end{abstract}

\section{Introduction}
Recent years have seen an explosion in the amount of data created globally \cite{rydning2018digitization}. The volume of data generated, consumed, copied, and stored is projected to reach more than 180 zettabytes by 2025. In 2020, the total amount of data generated and consumed was 64.2 zettabytes \cite{statistaData}. But traditional digital data storage technologies such as SSDs, hard drives, and magnetic tapes are approaching their fundamental density limits and would not be able to keep up with the increasing memory needs \cite{hilbert2011world}. 

Several molecular paradigms with significantly higher storage densities have been proposed recently \cite{ng2021data,launay2021precise, dickinson2021alternative, dahlhauser2021efficient, matange2021dna, rutten2018encoding, al2017mass, erlich2017dna, zhirnov2016nucleic, grass2015robust, yazdi2015rewritable, goldman2013towards}. Molecules with a structure consisting of different smaller molecules (monomers) joined together in sequences are called polymers. If different types of molecules denote different alphabets, then a polymer with a linear arrangement of these molecules, i.e., a polymer string, can be treated as a sequence of alphabets. DNA is one promising data storage medium which has generated significant interest in the data storage research community. However, DNA has several scalability constraints including an expensive synthesis and sequencing process which prevent large-scale commercialization. Furthermore, DNA is prone to a diverse type of errors such as mutations within strands, or loss of strands due to breakage or degradation that could lead to potential decoding errors or even complete loss of information \cite{matange2021dna}. 



 This has led researchers to search for alternatives in other synthetic polymers. For example, synthetic proteins (which are polymers of amino acids) are emerging as a potential alternative with data being stored using peptide sequences for the first time in 2021 \cite{ng2021data}. Compared to DNA and other types of polymers, proteins offer several advantages for data storage, including higher stability of some proteins than DNA \cite{warren2019move}, and availability of a larger set of possible monomers ($20$ amino acids are observed in natural proteins).  
 
In synthetic polymer strings, monomer units of different masses, which represent the two bits $0$ and $1$, are assembled into user-determined readable sequences.  A common family of technological methods for reading amino-acid sequences (and other bio-polymers) is mass spectrometry \cite{creighton1993proteins}. Mass spectrometers take a large number of identical polymer strings, randomly break the polymer into substrings, and analyze the resulting mixture. The mass sequencing spectrum obtained gives us the mass and frequency of each contiguous molecular substring. The process of recovering a molecular string from its mass sequencing spectrum is modeled into the problem of reconstructing a string from the multiset of the compositions of its contiguous substrings. 

The class of problems of reconstructing a string from substring information usually falls under the general framework of the string reconstruction problems. Due to their relevance in modelling molecular storage frameworks, the list of recent work in string reconstruction problems has grown rapidly \cite{9950539,ye2022reconstruction,marcovich2021reconstruction, gabrys2021reconstructing,cheraghchi2020coded, abroshan2019coding, gabrys2018unique, kiah2016codes,acharya2015string, motahari2013information}. The problem of string reconstruction from its substring compositions was first introduced in \cite{isit2010} and \cite{acharya2015string}. The main results from \cite{acharya2015string} assert that binary strings of length $\le 7$, one less than a prime, and one less than twice a prime are uniquely reconstructable, from their \textit{substring composition multiset}, up to reversal. The authors of \cite{acharya2015string} also introduced a backtracking algorithm for reconstructing a binary string from its \textit{substring composition multiset}, and provide sufficient conditions for reconstructability of a binary string using the proposed algorithm in \cite{acharya2015string} without the need for backtracking (lemma~\ref{property}). In the case of no backtracking, this algorithm has a time complexity of $O(n^2 \log{n})$. Note that in the case of backtracking, there is no guarantee that the time complexity will remain bounded polynomially with $n$. Relying on this reconstruction algorithm, the works of \cite{pattabiraman2023coding}, \cite{pattabiraman2019reconstruction} and \cite{gabrys2020mass} viewed the problem from a coding theoretic perspective. They proposed coding schemes that are capable of correcting a single mass error and multiple mass errors, respectively, and can be reconstructed by the reconstruction algorithm without backtracking.

The problem formulation in \cite{acharya2015string}, and subsequently in\cite{pattabiraman2023coding}, relies on the two following assumptions: a) One can uniquely infer the composition (number of monomers of each type) of a polymer from its mass; and b) The masses of all the substrings of a polynomial are observed with identical frequencies. In this work, we also rely on these assumptions.

In the context of combinatorics, the problem is closely related to the turnpike problem, also known as the partial digest problem, where the locations of $n$ highway exits need to be recovered from the multiset of their $\binom{n}{2}$ interexit distances. In \cite{acharya2015string}, the authors showed that the problem of string reconstruction from its \textit{composition multiset} can be reduced to an instance of the turnpike problem.

In this paper, we propose a new algorithm to reconstruct the set of binary strings with a given multiset of substring compositions. The proposed algorithm relies on on the algebraic properties of the equivalent bivariate polynomial formulation \cite{acharya2015string} of the problem. The algorithm finds the coefficients of the corresponding polynomial in a manner that reconstructs the binary string from both ends progressing towards the center. We show that the time complexity of the reconstruction process is reduced with our proposed algorithm compared to the combinatorial algorithm proposed in \cite{acharya2015string}. However, in general, a drawback of such algorithms is that they may need backtracking which can lead to reconstruction complexity that grows exponentially with the length $n$, in a worst case scenario. Therefore, we provide algebraic conditions on binary strings that are sufficient to guarantee unique reconstruction by the proposed algorithm without backtracking, that is in $O(n^2)$ time complexity. The algorithm naturally allows parallel implementation and has an $O(n \log{n})$ reconstruction latency. Furthermore, the \textit{no backtracking condition} of our algorithm is more general than that of the algorithm in \cite{acharya2015string}. This in particular implies that the \textit{reconstruction code} introduced in \cite{pattabiraman2023coding} is reconstructable by our reconstruction algorithm without backtracking. We also improve the time complexity in the case of backtracking. These results are specifically discussed in remark~\ref{rem: time_complexity}, remark~\ref{rem: no_backtracking_time_conclusion}, and remark~\ref{rem: SameBacktracking}. 
\begin{figure}[!htbp]
\centering
{\includegraphics[width=.40\textwidth]{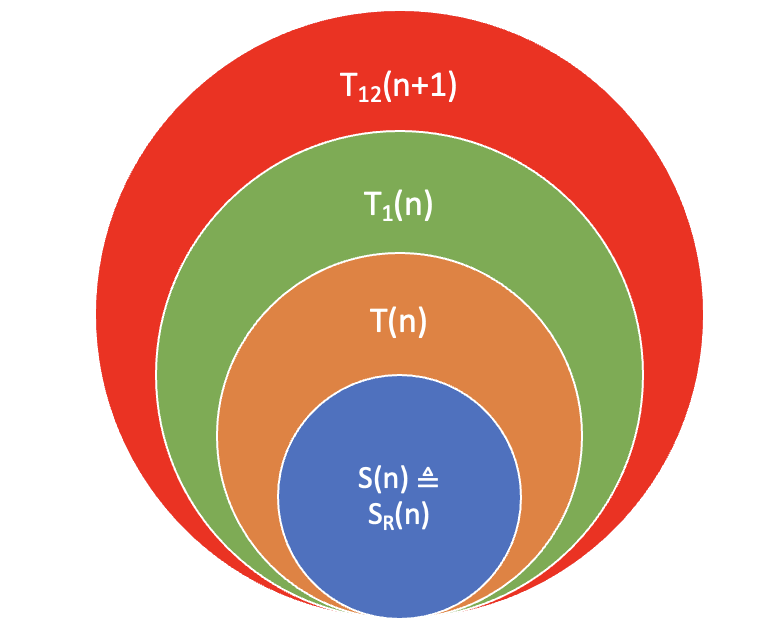}}\\
  \caption{Inclusion relation between different proposed codes and the the previous known code $S_R(n)$.}\vspace{-2mm}
  \label{fig: code_rel}
\end{figure}

In section~\ref{sec: Reconstruction Code}, properties of one-dimensional random walks are leveraged to explicitly characterize the set of binary strings that can be reconstructed by the algorithm in \cite{acharya2015string} without backtracking. In particular, we define this \textit{reconstruction code} to be $S(n)$ and show a bijection between $S(n)$; and $1$-dimensional \textit{positive} $n$-step walks starting from the origin. Using this bijection we propose efficient encoding and decoding procedures for $S(n)$ and show an equivalence between $S(n)$ and the reconstruction code introduced in \cite{pattabiraman2023coding} $S_R(n)$.  We further extend this codebook to propose a new reconstruction code $T(n)$ by expanding codebooks of different sizes in certain specified ways followed by taking a union of them. The size of $T(n)$ is shown to be linearly larger than $S(n)$, and equivalently $S_R(n)$. Furthermore, it is shown that both, the codebook $S(n)$ (and equivalently $S_R(n)$), and the codebook $T(n)$ are reconstructable by our proposed reconstruction algorithm in $O(n^2)$ time. Finally, exploiting the more general sufficient conditions, we slightly modify the proposed algorithm, to introduce larger codebooks $T_1(n)$, and $T_{12}(n)$. A comparison of the rates and redundancies of the different coding schemes is presented in figures~\ref{fig: code_rate} and \ref{fig: code_redundancy}.



The rest of this paper is organized as follows. We describe the problem setting, preliminaries, and relevant previous work in
Section~\ref{sec:Preliminaries}. Then, we describe the new reconstruction algorithm in 
Section~\ref{sec: Reconstruction Algorithm}. In Section~\ref{sec: Reconstruction Code}, we present the new reconstruction code. Finally, we discuss concluding remarks and future research directions in Section~\ref{sec: Conclusion}.

\section{Preliminaries}
\label{sec:Preliminaries}
In this section, we first establish some useful terminology and thereupon formally describe the problem being addressed in this paper. We then recap the results of \cite{acharya2015string}, and \cite{pattabiraman2023coding} which give the relevant background and provide a simple polynomial characterizing of the problem.
\subsection{Problem Formulation}
\label{sec:ProblemFormulation}

Let $s = s_1s_2 \ldots s_n$ be a binary string of length $n \ge 2$ and let $s_{i}^{j}$ denote the contiguous substring $s_is_{i+1} \ldots s_j$ of $s$, where $1\le i \le j \le n$. We will say that a substring $s_i^j$ has the composition $1^w0^z$ where $w$ and $z$ denote the number of $1s$ and $0s$ in the substring respectively. The composition multiset $C(s)$ of a sequence $s$ is the multiset of compositions of all contiguous substrings of $s$. For example, if $s = 1001$, then $C(s) = \{ 0^1,0^1,1^1,1^1,0^11^1,0^11^1,0^2,0^21^1,0^21^1,0^21^2 \}$.

\begin{definition}
\label{def: A(s)}
For a binary string $s$ of length $n$ and weight $d$, let $a_i$ be the number of zeros between the $i^{th}$ and $(i+1)^{th}$ $1$. Define $A(s)$ as the integer string $a_0 a_1 \cdots a_d$. 
\end{definition}

\begin{lemma}\label{lem: bijection}
$ A(s) \to s$ is a bijection between binary strings of length $n$, weight $d$ and non-negative integer strings of length $d+1$, weight (sum of values) $n-d$.
\end{lemma}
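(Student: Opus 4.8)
The plan is to exhibit an explicit two-sided inverse of the map $s \mapsto A(s)$, after which bijectivity is automatic. First I would check that the map lands where claimed: given a binary string $s$ of length $n$ and weight $d$, its $d$ ones split the zero-positions of $s$ into $d+1$ (possibly empty) maximal blocks of consecutive zeros — one before the first $1$, one between each consecutive pair of $1$s, and one after the last $1$ — and $a_i$ is the length of the $i$-th such block. Each $a_i$ is therefore a non-negative integer, and since every zero of $s$ lies in exactly one block, $\sum_{i=0}^d a_i$ equals the number of zeros of $s$, namely $n-d$. Hence $A(s)$ is indeed a non-negative integer string of length $d+1$ and weight $n-d$.

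Next I would define the reverse map: for any non-negative integer string $a_0 a_1 \cdots a_d$ of length $d+1$ with $\sum_{i=0}^d a_i = n-d$, let $B(a_0,\ldots,a_d)$ be the binary string formed by concatenating $a_0$ zeros, a $1$, $a_1$ zeros, a $1$, $\ldots$, a $1$, and finally $a_d$ zeros, exactly as in the display of Definition~\ref{def: A(s)}. This string contains $d$ ones and $\sum_{i=0}^d a_i = n-d$ zeros, so it has length $n$ and weight $d$ and lies in the source set. It then remains to verify $B(A(s)) = s$ and $A(B(a)) = a$: the former because writing down the zero-block lengths of $s$ and reconcatenating them around its $d$ ones reproduces $s$ symbol by symbol, and the latter because the zero-blocks of $B(a)$ delimited by its $d$ ones have, by construction, lengths $a_0, a_1, \ldots, a_d$ in that order. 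Having a two-sided inverse, $A$ is a bijection.

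I do not expect a genuine obstacle here; the lemma is essentially a restatement of the definition. The only point needing a moment's attention is the handling of empty blocks — in particular the degenerate case $d = 0$, where $s = 0^n$ and $A(s)$ is the single-symbol string $n$, and blocks with $a_i = 0$, which correspond to two adjacent $1$s in $s$. The convention adopted in Definition~\ref{def: A(s)} already accommodates both, so this is bookkeeping rather than a real difficulty.
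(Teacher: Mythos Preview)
Your proposal is correct. The paper takes a slightly different route: it considers the reconstruction map from integer strings to binary strings (your $B$), argues directly that this map is injective --- if two integer strings first differ at index $i$ then the positions of the $i$-th $1$ in the two reconstructed binary strings differ --- and then concludes bijectivity from the fact that both sets have cardinality $\binom{n}{d}$. Your approach instead exhibits $B$ as a two-sided inverse of $A$ and checks both compositions are identities. Both arguments are elementary; yours is more self-contained (no cardinality count needed) and also handles the edge cases $d=0$ and $a_i=0$ explicitly, while the paper's is marginally shorter once the $\binom{n}{d}$ count is granted.
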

\begin{proof}
Consider the mapping that maps the non-negative integer strings of length $d+1$ and weight $n-d$ to binary strings of length $n$ by constructing the corresponding binary string from an $A(s)$ as evident in definition~\ref{def: A(s)}.That is 
$$s =\underbrace{00 \cdots 0}_{a_0} 1 \underbrace{00 \cdots 0}_{a_1}1\underbrace{00 \cdots 0}_{a_2} 1 \cdots 1\underbrace{00 \cdots 0}_{a_d}.$$Now consider two such distinct non-negative integer strings $a = a_0\ldots a_d$ and $b = b_0 \ldots b_d$. If the first position they differ in is $i$, that is $a_i \neq b_i$ and $a_j = b_j$ for $0 \le j \le i-1$, then the corresponding binary strings differ in the positions of their $i^{th}$ 1s. Therefore, each such non-negative integer string corresponds to a unique binary string; implying that the mapping is injective. It is easy to see that both sets have the same size $\binom{n}{d}$, therefore implying the bijection.
\end{proof} 

A set of binary strings of fixed length is called a \textit{reconstruction code} if the composition multisets corresponding to the strings are distinct \cite{pattabiraman2023coding}. Note that a string $s$, and its reverse string ($s^r = s_ns_{n-1} \ldots s_1$) share the same composition multiset and therefore cannot simultaneously belong to a reconstruction code. We restrict the analysis of reconstruction codes to the subsets of strings of length $n$ beginning with $1$ and ending at $0$. This restriction only adds a constant redundancy to the code while ensuring that a string and its reversal are not simultaneously part of the code.

In this paper, the following two problems are addressed (1) Does there exist an efficient algorithm to reconstruct a binary string given its composition multiset?, and (2) Do their exist reconstruction codes of small redundancy and consequently, large rate that can be efficiently encoded and decoded, and can be reconstructed from their composition multiset efficiently? In section~\ref{sec: Reconstruction Algorithm}, we propose a new backtracking algorithm that reconstructs a string $s$ by recovering the integer string $A(s)$ from the corresponding composition multiset $C(s)$. We will use the bijection in lemma~\ref{lem: bijection} to design our reconstruction algorithm, and subsequently in section~\ref{sec: Reconstruction Code} give different families of reconstruction codes that satisfy the aforementioned properties.  
We will also use the following notations in our subsequent proofs: for a string $s$ and the corresponding integer string $A(s) = a_0a_1 \ldots a_d$, we use  $A_i^j(s)$ to denote the substring $a_ia_{i+1} \ldots a_j$ of $A(s)$ and $g_{i}^{j}(s)$ to denote the sum $a_i+a_{i+1} \ldots + a_j$, where $0 \le i \le j \le d$. Whenever clear from the context, \textit{we omit the argument $s$}. Observe that for any string $s$ with weight $d$, $g_0^d = n-d$. For instance, if $s = 10011010$, then $A(s) = 02011$ and $g_1^3 = 3$.



\subsection{Previous Work}
\label{sec: PreviousWork}
In this section, we first review the results of \cite{acharya2015string} that describe the equivalent polynomial formulation of binary strings and their composition multisets. This formulation is central to the design of our~\nameref{Reconstruction Algorithm} which we present in the next section. 
Thereafter, to construct our reconstruction code, we review some elementary results from random walks, and revisit the design of the reconstruction code introduced in \cite{pattabiraman2023coding}. 


\begin{definition}
\label{def: P_s(x,y)}
For a binary string $s =s_1s_2 \ldots s_n$, a bivariate polynomial $P_s(x,y)$ of degree $n$ is defined such that $P_s(x,y) = \sum_{i=0}^n \left(P_s(x,y)\right)_i$, where $\left(P_s(x,y)\right)_0 = 1$ and $\left(P_s(x,y)\right)_i$ is defined recursively as
\begin{equation}
\left(P_s(x,y)\right)_i =    \begin{cases}
y\left(P_s(x,y)\right)_{i-1} &\text{ if $s_i=0$}, \\
x\left(P_s(x,y)\right)_{i-1} &\text{ if $s_i=1$}. \\
\end{cases}
\end{equation}
\end{definition}

$P_s(x,y)$ contains exactly one term of total degree $j$ where $0\le j \le n$ and the coefficient of each term is $1$. The term of the polynomial with degree $j$ is of the form $x^wy^z$ where the substring $s_{1}^{j}$ of $s$ has composition $1^w0^z$. For example, if we consider the string $s=1001$, then $P_s(x,y) = 1+x+xy+xy^2+x^2y^2$. 
 

Similar to the bivariate polynomial for a binary string, we describe a bivariate polynomial $S_s(x,y)$ corresponding to every composition multiset. We associate each element $1^l0^m$ of the multiset with the monomial $x^ly^m$. This is equivalent to saying that an $x$ corresponds to a $1$ and a $y$ corresponds to a $0$ in every monomial of $S_s(x,y)$. As an example, for $s=1001$, $C(s) = \{ 0^1,0^1,1^1,1^1,0^11^1,0^11^1,0^2,0^21^1,0^21^1,0^21^2 \}$ and $S_s(x,y) = 2x+2y+2xy+y^2+2x^2y+x^2y^2$. 

We use the following identity from \cite{acharya2015string}:
\begin{equation}
\label{eqn: acharya_identity}
    P_s\left(x,y\right)P_s \left(\frac{1}{x}, \frac{1}{y} \right) = \left( n+1 \right) + S_s(x,y) + S_s \left(\frac{1}{x}, \frac{1}{y} \right).
\end{equation}

\begin{definition}
\label{def: f*(x,y)}
For a polynomial $f(x,y)$, let $f^*(x,y)$ be the polynomial (also known as reciprocal polynomial) defined as:
\begin{equation}
  f^*(x,y) \ \myeq \ x^{deg_x(f)}y^{deg_y(f)}f \left(\frac{1}{x}, \frac{1}{y} \right).  
\end{equation}
\end{definition}

It is easy to see that $f^*(x,y)$ is indeed a polynomial. 

\begin{remark}
If $P_s(x,y)$ is the bivariate polynomial for the string $s$, then $P_s^*(x,y) = P_{s^r}(x,y)$; that is $P_s^*(x,y)$ is the bivariate polynomial corresponding to the reverse string $s^r = s_ns_{n-1} \ldots s_1$. 
\end{remark}

\begin{definition}
\label{def: F(x,y)}
For a binary string $s$ of length $n$, and the corresponding polynomial $P_s(x,y)$, we define a polynomial $F_s(x,y)$ as: 
\begin{equation}
\label{Fdefinition}
    F_s(x,y) \ \myeq \ P_s(x,y)P_s^*(x,y).
\end{equation}
\end{definition}

Rewriting equation~\eqref{eqn: acharya_identity}, and using the definition in equation~\eqref{Fdefinition}, we obtain
\begin{equation} \label{eqn: F-S transformation}
    F_s(x,y) = x^{deg_x(P_s)}y^{deg_y(P_s)}\left( n+1+ S_s(x,y) \right)  + S_s^*(x,y).
\end{equation}

\begin{remark}\label{rem: F-S transformation}
This result shows that that the polynomial $F_s(x,y)$ can be evaluated directly from $S_s(x,y)$ or equivalently, the composition multiset. 
\end{remark}

\begin{lemma}\label{lemma: F-S equivalence}
For a binary string $s$, the polynomial $F_s(x,y)$ uniquely determines the composition multiset.
\end{lemma}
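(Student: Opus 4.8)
The plan is to invert the identity~\eqref{eqn: F-S transformation}, which writes $F_s$ as a sum of three pieces built from $S_s$, by showing that those three pieces occupy disjoint ranges of total degree. Write $d$ for the weight of $s$. Since the unique term of total degree $n$ in $P_s$ is $x^{d}y^{n-d}$ (the monomial attached to the substring $s_1^n=s$ itself), we have $\deg_x P_s=d$ and $\deg_y P_s=n-d$; and since $s$ is also a substring of itself, the monomial $x^{d}y^{n-d}$ appears in $S_s$, so likewise $\deg_x S_s=d$ and $\deg_y S_s=n-d$. I would first note that every monomial of $S_s$ is attached to the composition of a contiguous substring of $s$, and such substrings have length between $1$ and $n$; hence each monomial of $S_s$ has total degree in $\{1,2,\dots,n\}$. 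Feeding this into~\eqref{eqn: F-S transformation}: the summand $x^{\deg_x P_s}y^{\deg_y P_s}S_s(x,y)=x^{d}y^{n-d}S_s(x,y)$ has all of its monomials of total degree in $\{n+1,\dots,2n\}$; the summand $x^{\deg_x P_s}y^{\deg_y P_s}(n+1)=(n+1)\,x^{d}y^{n-d}$ is a single monomial of total degree exactly $n$; and the summand $S_s^*(x,y)$ has all monomials of total degree at most $n-1$, because a monomial $x^{l}y^{m}$ of $S_s$ (with $l+m\ge 1$) contributes $x^{\deg_x S_s-l}\,y^{\deg_y S_s-m}=x^{d-l}y^{n-d-m}$ to $S_s^*$, of total degree $n-(l+m)\le n-1$.

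The three ranges $\{0,\dots,n-1\}$, $\{n\}$, and $\{n+1,\dots,2n\}$ being pairwise disjoint, the homogeneous parts of $F_s$ decouple: the sum $H$ of all monomials of $F_s$ of total degree strictly greater than $n$ equals exactly $x^{d}y^{n-d}S_s(x,y)$. It remains to observe that $n$ and $d$ can be read off from $F_s$ itself. Because $P_s$ has all coefficients equal to $1$ and $P_s^*=P_{s^R}$ likewise, the product $F_s=P_sP_s^*$ of~\eqref{Fdefinition} has nonnegative coefficients and no cancellation occurs; therefore the total degree of $F_s$ is $2n$ and $\deg_x F_s=2d$. So, given $F_s$, one recovers $n$ and $d$, forms $H$, and then obtains $S_s(x,y)=x^{-d}y^{-(n-d)}H$, which is a genuine polynomial. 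Finally, by the definition of $S_s$ the coefficient of $x^{l}y^{m}$ in $S_s$ is precisely the multiplicity of the composition $1^{l}0^{m}$ in $C(s)$, so $S_s$ — and hence $F_s$ — determines the composition multiset.

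The argument is essentially just this degree bookkeeping, and the one point that needs care is the bound $\deg_x S_s+\deg_y S_s=n$ (equivalently, that the full string $s$ realizes the extreme composition $1^{d}0^{n-d}$): this is exactly what forces $S_s^*$ to lie strictly below total degree $n$ and keeps all three summands of~\eqref{eqn: F-S transformation} from overlapping. Everything else — the total-degree bounds on the monomials of $S_s$, and the absence of cancellation in $P_sP_s^*$ — is immediate from the definitions.
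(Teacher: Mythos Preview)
Your proof is correct and follows essentially the same degree-separation idea as the paper: both observe that the three summands in~\eqref{eqn: F-S transformation} (equivalently, the terms $S_s(x,y)$, $n+1$, and $S_s(1/x,1/y)$ in~\eqref{eqn: acharya_identity}) occupy disjoint ranges of total degree, so $S_s$ can be read off once $n$ and $d$ are known. You are in fact a bit more careful than the paper in explaining how to recover $n$ and $d$ from $F_s$ itself (via $\deg F_s=2n$ and $\deg_x F_s=2d$, using that $P_sP_s^*$ has no cancellation), whereas the paper's phrasing --- that $\deg_x P_s$ and $\deg_y P_s$ ``can be uniquely recovered from the composition multiset'' --- refers to the very object being determined and reads as mildly circular.
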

\begin{proof}
Note that coefficient of $x^ay^b$ in $S(x,y)$ is less than the number of contiguous substrings of length $(a+b)$, which is less than $(n+1)$. Therefore from equation~\ref{eqn: F-S transformation}, $deg_x(P_s)$ and $deg_y(P_s)$ can be uniquely recovered as the degrees of the only monomial with coefficient $\ge (n+1)$. $F_s(x,y)x^{-deg_x(P_s)}y^{-deg_y(P_s)} = (n+1)+S_s(x,y) + S_s(\frac{1}{x},\frac{1}{y})$. Since the polynomial $S_s(x,y)$ has no constant term, the coefficients of $S_s(x,y)$ can be obtained by comparing the coefficients of each degree on both sides of the equality, thereby proving the lemma. 
\end{proof}

\begin{remark}
The remark~\ref{rem: F-S transformation}, and lemma~\ref{lemma: F-S equivalence} imply a bijection between the composition multiset $C(s)$ and the polynomial $F_s(x,y)$. This equivalence is necessary to show the veracity of our algorithm.
\end{remark}

Now, we discuss the preliminaries required for the design of reconstruction code introduced in Section~\ref{sec: Reconstruction Code}. Lemma~\ref{property} gives sufficient conditions for a binary string to be uniquely reconstructed in polynomial time complexity by the algorithm in \cite{acharya2015string}. 

\begin{definition}\label{def: imbalanced_string}
    If a binary string $s$ of length $n$, is such that for all prefix-suffix pairs of length $1 \le j \le n$, one has $wt(s_1^j) \neq wt(s_{n+1-j}^n)$, then $s$ will be called an \textit{imbalanced} string.
\end{definition}

\begin{lemma}[\cite{acharya2015string}, Lemma 37]\label{property}
An imbalanced string $s$ of length $n$ is uniquely reconstructable in $O(n^2 \log{n})$ time by the reconstruction algorithm of \cite{acharya2015string}.
\end{lemma}

In section~\ref{sec: Reconstruction Code}, we show a bijection between \textit{imbalanced} strings of length $n$ that begin with $1$ and end with $0$, and positive $n$-step walks on a line. Using this bijection, we explicitly characterize the set of binary strings reconstructable by the algorithm in \cite{acharya2015string}. 


\begin{definition}\label{def: walk}
A $1$-dimensional \textit{positive} $n$-step walk is defined as an assignment of $n$ variables $X_i \in \{-1,1\}$ for $1 \le i \le n$, such that $S_k = \sum_{i=1}^k X_i$ is positive for $1 \le k \le n$. 
\end{definition}

\begin{lemma}[\cite{feller1967introduction}, Lemma 3.1]\label{lem: NumberOfWalks}
The number of $1$-dimensional \textit{positive} $n$-step walks is $\binom{n-1}{\lfloor \frac{n-1}{2} \rfloor}$.
\end{lemma}

The reconstruction code in \cite{pattabiraman2023coding} uses Catalan-type strings to construct a codebook. The codebook is designed so that for any given codeword and any same-length prefix-suffix substring pair of that codeword, the two substrings have distinct weights.

\begin{definition}[\cite{pattabiraman2023coding}]\label{def: SR(n)}
For reconstruction code $S_R(n)$ of even length ($n$ even):
\[\begin{aligned}
    \mathllap{S_R(n)} \ \myeq \ \{ s \in \{0,1\}^n &\text{, such } \text{that } s_1 = 0 \text{, }s_{n} = 1, \\
    \exists \ I \in \{2,3,&\ldots ,n-1\} \text{, such } \text{that }\\  &\qquad \hspace{.5cm}\text{ for all } i \in I, s_i \neq s_{n+1-i} \\
    &\qquad \hspace{.5cm}\text{ for all } i \notin I, s_i = s_{n+1-i} \\
    &s_{[n/2]\cap I} \text{ is } \text{a Catalan Type String}
    \}.
\end{aligned}\\
\]

For reconstruction code $S_R(n)$ of odd length ($n$ odd):
\begin{align*}
  S_R(n) \ \myeq \ \{ s_1^{(n-1)/2}0s_{(n+1)/2}^{n-1}, s_1^{(n-1)/2}1s_{(n+1)/2}^{n-1}  \text{,} \\ \text{where }s \in S_R(n-1)\}.  
\end{align*}
\end{definition}

The authors in \cite{pattabiraman2023coding} extend this coding scheme to correct single and multiple mass errors. These code extensions relied only upon the fact that all strings in $S_R(n)$ are \textit{imbalanced} strings. In \cite{ye2022reconstruction}, the authors show an equivalence between the set of imbalanced strings beginning with $0$, and ending with $1$, and the codebook $S_R(n)$.

\begin{lemma}[\cite{ye2022reconstruction}, Lemma IV.2]\label{lem: S_R(n)_imbalance_equivalence}
    $S_R(n)$ is the set of all \textit{imbalanced} binary strings of length $n$  beginning with $0$, and ending in $1$.
\end{lemma}


Finally, we give well known bounds on the central binomial coefficient which we will use to show the rate of our reconstruction code.

\begin{lemma}
\label{lemma: 2nChoosen bounds}
The central binomial coefficient may be bounded as:
\begin{align}
  \frac{4^{n}}{\sqrt{\pi (n+1/2)}} \le \binom{2n}{n} \le \frac{4^{n}}{\sqrt{\pi n}} \hspace{1cm} \forall \ n \ge 1.  
\end{align}
\end{lemma}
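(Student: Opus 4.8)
The plan is to reduce the statement to a two‑sided estimate on the normalized quantity $c_n := 4^{-n}\binom{2n}{n}$ and then sandwich $c_n$ between two explicitly monotone sequences whose common limit is pinned down by the Wallis product. Concretely, it suffices to prove $\frac{1}{2\sqrt{\pi n}} \le c_n \le \frac{1}{\sqrt{\pi n}}$ for all $n \ge 1$ and multiply through by $4^n$.

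First I would record the elementary recursion
\[
c_{n+1} = \frac{2n+1}{2n+2}\,c_n, \qquad c_1 = \tfrac12,
\]
which follows from $\binom{2n+2}{n+1} = \frac{2(2n+1)}{n+1}\binom{2n}{n}$. From this, a short computation shows that $x_n := n\,c_n^2$ is nondecreasing, since its consecutive ratio is $\frac{(2n+1)^2}{4n(n+1)} = 1 + \frac{1}{4n(n+1)} \ge 1$, while $y_n := \bigl(n+\tfrac12\bigr)c_n^2$ is nonincreasing, its ratio being $\frac{(2n+1)(2n+3)}{(2n+2)^2} = 1 - \frac{1}{(2n+2)^2} \le 1$. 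Since $c_n$ is decreasing to $0$, we have $y_n - x_n = \tfrac12 c_n^2 \to 0$, so $x_n$ and $y_n$ converge to a common limit $L$, and in particular $x_n \le L \le y_n$ for every $n \ge 1$.

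Next I would identify $L$. Invoking the Wallis product in the form $\frac{1}{(2n+1)c_n^2} \to \frac{\pi}{2}$ (equivalently $y_n \to \frac1\pi$) gives $L = \frac1\pi$. Combining the squeeze with $n+\tfrac12 \le 2n$ for $n \ge 1$, we obtain for all $n\ge1$
\[
n\,c_n^2 \;\le\; \frac1\pi \;\le\; \Bigl(n+\tfrac12\Bigr)c_n^2 \;\le\; 2n\,c_n^2,
\]
hence $\frac{1}{2\pi n} \le c_n^2 \le \frac{1}{\pi n}$. Taking square roots and multiplying by $4^n$ yields $\frac{4^n}{\sqrt{2\pi n}} \le \binom{2n}{n} \le \frac{4^n}{\sqrt{\pi n}}$, which even slightly strengthens the claimed lower bound because $\sqrt{2\pi n} < 2\sqrt{\pi n}$.

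The only non-elementary ingredient — and the conceptual obstacle — is the evaluation of the common limit as $1/\pi$: because the upper bound $\binom{2n}{n}\le 4^n/\sqrt{\pi n}$ is asymptotically tight, it cannot be obtained by any lossy induction, and one must appeal to the exact constant supplied by the Wallis product (equivalently, Stirling's formula). Everything else is routine algebraic manipulation of the recursion; alternatively, one may simply cite the bound as standard.
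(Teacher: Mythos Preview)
Your argument is correct. The recursion for $c_n$, the monotonicity of $x_n=nc_n^2$ and $y_n=(n+\tfrac12)c_n^2$, and the identification of the common limit via Wallis all check out, and you even sharpen the lower bound to $4^n/\sqrt{2\pi n}$.

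As for comparison: the paper does not prove this lemma at all. It introduces it with the phrase ``we will use the following well known bounds on the central binomial coefficient'' and states it without argument, treating it as standard. So there is no paper proof to compare against; your Wallis-based derivation is a clean, self-contained justification of what the authors simply quote. Your closing remark that ``one may simply cite the bound as standard'' is in fact exactly what the paper does.
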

\vspace{2mm}
\section{Reconstruction Algorithm} \label{sec: Reconstruction Algorithm}
As discussed in Section \ref{sec:ProblemFormulation}, 
we only work with binary strings beginning with $1$ and ending with $0$. In other words, only strings $s= s_1 \ldots s_n$ with $s_1=1,s_n=0$ are considered. In this section, we introduce a new reconstruction algorithm to recover such strings from a given composition multiset. Given a composition multiset, our reconstruction algorithm successively reconstructs $A(s) = a_0 \ldots a_d$, starting from both ends and progressing towards the center. In other words, $a_0$ and $a_d$ are covered first, followed by $a_1$ and $a_{d-1}$, etc.; and the algorithm backtracks when there is an error in recovering a pair. The algorithm takes as input the polynomial $F(x,y)$ (definition~\ref{def: F(x,y)}). Note that the polynomial $F(x,y)$ can be derived from $S(x,y)$ (remark~\ref{rem: F-S transformation}) which in turn is equivalent to the corresponding composition multiset. The algorithm will return the set of strings which have the given composition multiset. We will use the fact that for a string $s$ with the given composition multiset, we must have $F_s(x,y) = F(x,y)$. The lemma~\ref{lemma: F-S equivalence} guarantees that strings recovered in this way indeed have the desired composition multiset.

Before the algorithm is discussed, we first show how certain parameters of a string $s$ with the given composition multiset can be readily recovered from the polynomial $F(x,y)$. These parameters will be subsequently used as inputs to the algorithm. 
\medskip \\
For a string $s= s_1 \ldots s_n$ with $s_1=1,s_n=0$, the corresponding non-negative integer string $A(s)$ (definition~\ref{def: A(s)}) is such that $a_0 = 0$ and $a_d \ge 1$. Using definitions \ref{def: P_s(x,y)} and \ref{def: f*(x,y)},
\begin{align}
    \label{eqn: P_s(x,1)}P_s(x,1) &= 1 + \left(a_1+1 \right)x + \cdots + \left(a_d+1 \right)x^d, \\
    \label{eqn: P_s*(x,1)}P_s^*(x,1) &= (a_d+1) + \cdots + \left(a_1+1 \right)x^{d-1}+x^d.
\end{align}

Since a string $s$ with the given composition multiset must have $F_s(x,y) = F(x,y)$, from definition~\ref{def: F(x,y)}: $F(x,1)=F_s(x,1) = P_s(x,1)P_s^*(x,1)$. Therefore, using equations~\eqref{eqn: P_s(x,1)} and \eqref{eqn: P_s*(x,1)}, the weight of the string $s$ and $a_d$ (where $A(s) = a_0 \ldots a_d$) can be recovered from $F(x,y)$ as follows:

\begin{equation}
    wt(s) = d = \frac{\deg{F(x,1)}}{2} \hspace{1mm}\text{ and } \hspace{2mm}
    a_d = F(0,1) - 1.
\end{equation}

The algorithm will utilize the polynomial formulation of the problem by mapping them to elements of a polynomial ring by considering the coefficients as elements of a sufficiently large finite field, i.e., $\mathbb{F}_q$ with $q$ being a prime number greater than $n$. Let $\lambda \in \mathbb{F}_q$ be a primitive element of this field. We will discuss several properties of the polynomials $P_s(x,\lambda)$ and $P_s^*(x,\lambda)$ (which lie in the ring $\mathbb{F}_q[x]$) which we use in the algorithm. 

\begin{definition}
Given $a_1, \ldots, a_j$ and $a_d,\ldots,a_{d-j}$ in $\mathbb{N} \cup \{0\}$; define the polynomials $\alpha_j(y)$ and $\beta_j(y)$ as follows:
 \begin{align}
     \label{def: polynomial alpha_i}
     \alpha_j(y) &= y^{g_{0}^{j-1}} + y^{1+g_{0}^{j-1}} +\cdots+ y^{g_{0}^{j}},\\
     \label{def: polynomial beta_i}
     \beta_j(y) &= y^{g_{d-j+1}^{d}} + y^{1+g_{d-j+1}^{d}} +\cdots+ y^{g_{d-j}^{d}};
 \end{align}
 where $g_{k}^{l}$ denotes the sum $a_k+a_{k+1} \ldots + a_l$ (defined in section~\ref{sec:ProblemFormulation}).
\end{definition}
Then, using definitions \ref{def: P_s(x,y)} and \ref{def: f*(x,y)}, $\alpha_j(\lambda)$ and $\beta_j(\lambda)$ denote the coefficients of $x^j$ in $P_s(x,\lambda)$ and $P_s^*(x,\lambda)$, respectively.

In particular,
\begin{align}
&\alpha_0(\lambda) = 1, \text{ and } \hspace{1mm} \alpha_d(\lambda) = \lambda^{n-d-a_d} ( 1+\lambda+\cdots+\lambda^{a_d});\\ 
&\beta_{d}(\lambda) = \lambda^{n-d},  \text{ and } \hspace{1mm}   \beta_{0}(\lambda) = 1+\lambda+\cdots+\lambda^{a_d}.
\end{align}

\begin{remark}
$\alpha_j(\gamma)$ and $\beta_j(\gamma)$ correspond to the coefficients of $x^j$ in $P_s(x,\gamma)$ and $P_s^*(x,\gamma)$, respectively, for all $\gamma \in \mathbb{F}_q$. For instance, putting $\gamma=1$ gives $\alpha_j(1) = a_j+1$ and $\beta_j(1) = a_{d-j}+1$ which are the coefficients of $x^j$ in the polynomials $P_s(x,1)$ (equation~\eqref{eqn: P_s(x,1)}) and $P_s^*(x,1)$ (equation~\eqref{eqn: P_s*(x,1)}), respectively.
\end{remark}

The reconstruction algorithm will find $a_j$ and $a_{d-j}$ together at step $j$. Note that in equation~\ref{def: polynomial alpha_i}$, \alpha_i(y)$ is defined using $g_0^i$ and $g_0^{i-1}$, and therefore, can be obtained by knowing the elements $a_1, \ldots, a_i$. Similarly, $\beta_i(y)$ can be obtained from $a_d,\ldots ,a_{d-i}$. Hence, for a string $s$, if by the end of step $j-1$, the algorithm recovers the pairs $(a_1,a_{d-1}),(a_2,a_{d-2}), \ldots ,(a_{j-1},a_{d-j+1})$; the  polynomials $\alpha_0(y), \ldots ,\alpha_{j-1}(y)$ and $\beta_0(y), \ldots, \beta_{j-1}(y)$ are well defined.

\begin{definition}\label{def: f_j(y)}
Let $r_j(y)$ denote the coefficient of $x^j$ in $F(x,y)$. Then $r_j(y)$ can be treated as a polynomial in $y$. At the end of step $j-1$, for polynomials $\alpha_0(y), \ldots, \alpha_{j-1}(y)$ and $\beta_0(y), \ldots, \beta_{j-1}(y)$, define the polynomial $f_j(y)$ as follows:
\begin{equation}
    f_j(y) \ \myeq \ r_j(y) - \sum_{k=1}^{j-1} \alpha_k(y)\beta_{j-k}(y).
\end{equation} 
\end{definition}

By the end of step $j-1$, since we know the polynomials $\alpha_0(y), \ldots, \alpha_{j-1}(y)$ and $\beta_0(y), \ldots, \beta_{j-1}(y)$, we can compute $f_j(y)$. At step $j$, the algorithm wants to find the pair $(a_j,a_{d-j})$. If the pairs $(a_1,a_{d-1}), \ldots \\(a_{j-1},a_{d-j+1})$ are identified correctly, then for the correct pair $(a_j,a_{d-j})$, the coefficient of $x^j$ in $F_s(x,y) \in \mathbb{F}_q[x]$ is $\sum_{i=0}^j\alpha_i(y)\beta_{j-i}(y)$. Since $F_s(x,y) = F(x,y)$, we must have $\sum_{i=0}^j\alpha_i(y)\beta_{j-i}(y) = r_j(y)$. As discussed above, we already know $\alpha_0(y), \ldots ,\alpha_{j-1}(y)$ and $\beta_0(y), \ldots, \beta_{j-1}(y)$ by step $j-1$; therefore a correct pair $(a_j,a_{d-j})$ must satisfy
\begin{equation}\label{eqn: f_j(y)}
f_j(y) = \alpha_0(y)\beta_j(y)+\alpha_j(y)\beta_0(y).    
\end{equation}
By noting that the degrees of both sides should be equal, we have
\begin{equation}\label{eqn: deg(f_j)}
\begin{aligned}
deg(f_j) &= \max \{deg(\alpha_0\beta_j),deg(\alpha_j\beta_0) \} \\
&= \max \{ g_{d-j}^{d},g_{0}^{j}+a_d \}.
\end{aligned}
\end{equation}

Furthermore, observe that $\alpha_i(1) = \beta_{d-i}(1) = 1+a_i$ and hence, $f_j(1) = \beta_j(1) + \left(a_d+1\right)\alpha_j(1)$. From this we obtain: 
\begin{equation}\label{eqn: f_j(1)}
f_j(1) = \left(1+a_{d-j}\right) +  \left(a_d+1\right)\left(a_j+1\right).    
\end{equation} 

We will use these equations to compute the possible values for the pairs $(a_j,a_{d-j})$. Note that equations \eqref{eqn: deg(f_j)} and \eqref{eqn: f_j(1)} give us two possible values for the pair $(a_j,a_{d-j})$ at step $j$. This procedure is captured in the algorithm presented below. The correctness of the algorithm is guaranteed by the lemma~\ref{lemma: F-S equivalence} which showed the strings which share the same $F(x,y)$ indeed share the same composition multiset.

\begin{remark}
    In proposition~\ref{ProblemStringLemma}, we show that verifying the equation \eqref{eqn: f_j(y)} for $y = \lambda$ and $y = \lambda^{-1}$, where $\lambda$ is a primitive root of $\mathbb{F}_q$, is enough to say that equation \eqref{eqn: f_j(y)} holds for all $y$.
\end{remark}


\begin{algorithm}[H] \label{Reconstruction Algorithm}
\SetAlgoLined
\SetKwInOut{Input}{Input}
\SetKwInOut{Output}{Output}

\SetAlgoLined

\vspace{.1cm}
\Input{Polynomial $F(x,y)$, array $A$ of size $d$ initialized with $A[d]=a_d$ and $A[i]=0$ for all $0 \le i \le d-1$,}

\Output{Codestrings $s \in \{0,1\}^n$ }
\vspace{-.3cm}

\hrulefill \\
\vspace{-.4cm}
\DontPrintSemicolon
  \SetKwFunction{FMain}{Reconstruction}
  \SetKwProg{Fn}{Function}{:}{}
  
  \Fn{\FMain{$j$, $F$, $M$}}
  {
    \medskip  
    \uIf{$j = d/2$}
    {
        \uIf{M corresponds to some binary string $s$}{
        S = s \; 
        }
        \uElse{S = empty set \;
        }{\KwRet $S$}
    }
    
    \medskip
    Compute $deg(f_j)$ and $f_j(1)$\;
    
    \medskip    
    $a_j = deg(f_j) - (g_0^{j-1}+a_d)$\\
    \hangindent=0\skiptext\hangafter=0
    $a_{d-j}=f_j(1)-1-(a_d+1)\left(a_j+1\right)$ \;
    \medskip
  
        \uIf{$a_j \ge 0$, $a_{d-j} \ge 0$, $\alpha_0(\lambda)\beta_j(\lambda)+\alpha_j(\lambda)\beta_0(\lambda) = f_j(\lambda)$, and $\alpha_0(\lambda^{-1})\beta_j(\lambda^{-1})+\alpha_j(\lambda^{-1})\beta_0(\lambda^{-1}) = f_j(\lambda^{-1})$}
        { $M[j] = a_j, M[d-j] = a_{d-j}$\\
        \hangindent=0\skiptext\hangafter=0
        $S =$   \FMain{$j+1$, $F$, $M$}\; }
    
    \medskip
    $a_{d-j} = deg(f_j)-g_{d-j+1}^d$ \\
    \hangindent=0\skiptext\hangafter=0
    $a_j = (f_j(1)-1-a_{d-j})/(a_d+1)$\;
    \medskip  
  
        \uIf{$a_j \in \mathbb{N} \cup \{0\}$, $a_{d-j} \ge 0$, $\alpha_0(\lambda)\beta_j(\lambda)+\alpha_j(\lambda)\beta_0(\lambda) = f_j(\lambda)$, and $\alpha_0(\lambda^{-1})\beta_j(\lambda^{-1})+\alpha_j(\lambda^{-1})\beta_0(\lambda^{-1}) = f_j(\lambda^{-1})$}
        { $M[j] = a_j, M[d-j] = a_{d-j}$ \\
        \hangindent=0\skiptext\hangafter=0
        $S = S \ \cup$   \FMain{$j+1$, $F$, $M$}\;
        }
   \vspace{.1cm}
   \KwRet $S$\;
   \vspace{.1cm}
 }
\caption{Reconstruction Algorithm}
\end{algorithm}
\begin{remark}
From equations~\eqref{def: polynomial alpha_i}, and \eqref{def: polynomial beta_i}, we see that $\alpha_k(\lambda)$ and $\beta_{l}(\lambda)$ are of the form $\frac{\lambda^{a}(\lambda^{b}-1)}{\lambda-1}= \lambda^{a+b'}$, where $\frac{\lambda^{b}-1}{\lambda-1} = \lambda^{b'}$. Assuming addition as an $O(1)$ operation, pre-storing $b'$ corresponding to $b$; $\alpha_k(\lambda)\beta_{j-k}(\lambda)$ can be evaluated in $O(1)$ as a power of $\lambda$ and consequently, $\sum_{k=1}^{j-1}\alpha_k(\lambda)\beta_{j-k}(\lambda)$ can be calculated in $O(j)$ time and $O(n)$ space. If the coefficient $a_{k,l}$ of $x^ky^l$ of the polynomial $F(x,y)$ are stored in a matrix, then $a_{j,l}\lambda^l$ can be calculated in $O(1)$ time, and the $n$ row values can be summed in $O(n)$ time. Thus $f_j(\lambda)$ can be calculated in $O(n)$ time.
\begin{remark}
Asymptotically addition is an $O(\log{n})$ process, but for practically relevant values of $n$, addition can be considered an $O(1)$ process. For example, on a $32$-bit system, two $32$ bit numbers can be added in one cycle, and therefore for $\log n < 32$, addition can be assumed to be an $O(1)$ process, and therefore for practical values of $n$, calculating $f_j(\lambda)$ is an $O(n)$ process.
\end{remark}
\begin{remark}
Since the degree and the coefficients of the polynomial $f_j(y)$ (Definition~\ref{def: f_j(y)}) are always non-negative integers less than $n$, $deg(f_j) = \lfloor log_{n+1}(f_j(n+1)) \rfloor$. 
\end{remark}
\end{remark}
\begin{remark}\label{rem: time_complexity}
Assuming no back-tracking, time complexity of the algorithm is $O(dn) = O(n^2)$. This is better than the $O(n^2 \log{n})$ time complexity of the backtracking algorithm proposed by Acharya et. al. in \cite{acharya2015string}. Furthermore, the reconstruction algorithm can be implemented over $O(n \log n)$ latency by executing additions in parallel while calculating $f_j(\lambda)$ etc.
\end{remark}

The reconstruction algorithm has at most two valid choices for the pair $(a_j,a_{d-j})$ at step $j$, and therefore can have at most two branches at any step. If both the conditions are satisfied i.e. both choices are valid according to the algorithm; then our algorithm must choose one direction to proceed. If an error is encountered later, the algorithm comes back to the last branch (not taken yet) where both conditions were satisfied and takes the alternate path. If exactly one condition is satisfied, then our algorithm takes the corresponding path. If neither of the two conditions are satisfied, then assuming the input composition multiset to be valid, our algorithm must have taken the wrong branch in the past (when it had a choice). In such a scenario, our algorithm goes back to the last valid branch where both conditions were satisfied, and takes the alternate branch and proceeds as described. 

We say that a string $s$ \textit{stops} at step $j$ if the algorithm fails to uniquely determine $(a_j,a_{d-j})$ at step $j$. As explained above, this is possible if either both or neither of the two conditions are satisfied. In both cases, the algorithm had a step $j' \le j$ where both of the two conditions were satisfied. Therefore, we will say a string $s$ \textit{pauses} at step $j$ if there are two acceptable branches for $(a_j,a_{d-j})$. In the following lemma, we give algebraic conditions \ref{Type1ProblemString} and \ref{Type2ProblemString}, characterizing the strings that \textit{pause} at some step $j$. 

\begin{proposition} \label{ProblemStringLemma}
Let the bi-variate polynomial corresponding to a string $s$ be  $F_s(x,y)$. Then the reconstruction algorithm \textit{pauses} at step $j$ if and only if the string $s$ satisfies either of the following two relations:
\begin{align}
\label{Type1ProblemString}g_0^{j} - g_{d-j}^d &= a_0+1 = 1 \hspace{3mm} \text{ and } a_j \ge 1\\
\label{Type2ProblemString}g_{d-j}^d - g_0^{j}  &= a_d+1 \hspace{9mm} \ \text{ and } a_{d-j} \ge a_d+1
\end{align}
Moreover, when the reconstruction algorithm \textit{pauses} at step $j$, both the choices for the tuple $(a_j,a_{d-j})$ satisfy equation~\ref{eqn: f_j(y)}.
  
\end{proposition}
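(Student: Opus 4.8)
The plan is to analyze exactly when the algorithm's two candidate pairs $(a_j, a_{d-j})$ coincide versus diverge, since a \emph{pause} at step $j$ means the algorithm cannot uniquely determine the pair — either both branches are valid (two distinct admissible pairs), or neither is. The starting point is equation~\eqref{eqn: f_j(y)}: a correct pair must satisfy $f_j(y) = \alpha_0(y)\beta_j(y) + \alpha_j(y)\beta_0(y)$, whose two summands have degrees $g_{d-j}^d$ and $g_0^j + a_d$ respectively (using $\alpha_0 = 1$, and the top degree of $\beta_0$ being $a_d$). The algorithm reads off $\deg(f_j)$ and $f_j(1)$ and produces \emph{two} candidate pairs, depending on which of the two summands is assumed to attain the maximum degree. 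So the first step is: write down explicitly the two candidate pairs as functions of $\deg(f_j)$ and $f_j(1)$ — one from assuming $\deg(\alpha_j\beta_0) = \deg(f_j)$ (giving $a_j$ directly from \eqref{eqn: deg(f_j)}, then $a_{d-j}$ from \eqref{eqn: f_j(1)}), the other from assuming $\deg(\alpha_0\beta_j) = \deg(f_j)$ (giving $a_{d-j}$ first, then $a_j$).

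Next I would establish the forward direction: suppose the algorithm pauses. Since the true pair $(a_j, a_{d-j})$ is always one of the two candidates, a pause forces the \emph{other} candidate to also be admissible and distinct from the true one (when both conditions hold), or forces the true pair to fail the admissibility checks under the wrong degree-assumption in a way that still leaves exactly two candidates generated. The cleanest route: a pause happens precisely when the two summands $\alpha_0\beta_j$ and $\alpha_j\beta_0$ do \emph{not} have a strict unique maximal degree that pins down the decomposition — i.e., when the "alternate" degree assignment also yields a nonnegative-integer solution. Assuming, say, $g_{d-j}^d \le g_0^j + a_d$ (the true max is $g_0^j + a_d$), the alternate assignment sets $a_{d-j}$ so that $g_{d-j}^d$ (recomputed) equals $\deg(f_j) = g_0^j + a_d$; tracking through definition of $g$, this alternate $a_{d-j}$ equals the true $a_{d-j} + (g_0^j + a_d - g_{d-j}^d)$, and then \eqref{eqn: f_j(1)} forces a corresponding shifted $a_j$. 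Requiring this alternate pair to be a valid nonnegative integer pair distinct from the true one, and chasing the arithmetic of \eqref{eqn: f_j(1)} (which is a small quadratic-type relation in $a_j$ with coefficient $a_d+1$), is what pins down the gap $g_0^j + a_d - g_{d-j}^d$ to equal exactly $a_d + 1$, i.e. $g_{d-j}^d - g_0^j = -1$ wait — more carefully, the two symmetric cases $g_0^j - g_{d-j}^d = 1$ (with $a_j \ge 1$) and $g_{d-j}^d - g_0^j = a_d+1$ (with $a_{d-j} \ge a_d+1$) emerge as the only two ways the alternate candidate survives as a genuine second admissible pair; the side conditions $a_j \ge 1$ and $a_{d-j} \ge a_d+1$ are exactly the nonnegativity/integrality constraints that make the shifted pair legitimate. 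For the converse, I would take a string satisfying \eqref{Type1ProblemString} or \eqref{Type2ProblemString}, exhibit the explicit second admissible pair (the shifted one), verify it satisfies both \eqref{eqn: deg(f_j)} and \eqref{eqn: f_j(1)} and the $\lambda, \lambda^{-1}$ checks — here I'd invoke that matching $f_j$ at $1$, $\lambda$, $\lambda^{-1}$ together with the degree constraint over $\mathbb{F}_q$ with $q > n$ forces the polynomial identity \eqref{eqn: f_j(y)}, so the check passes — and conclude the algorithm cannot distinguish, hence pauses.

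The last clause, that a pause yields \emph{exactly} two choices (not more), follows because \eqref{eqn: deg(f_j)} and \eqref{eqn: f_j(1)} are the only constraints the algorithm uses to \emph{generate} candidates, and they generate at most two (one per choice of which summand is degree-maximal); the argument above shows that when a pause occurs both are realized, and no third candidate is ever produced. I would fold this into the end of the converse direction.

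The main obstacle I anticipate is the bookkeeping in the forward direction: carefully translating "the alternate degree assignment produces a valid pair" into the clean arithmetic conditions \eqref{Type1ProblemString}–\eqref{Type2ProblemString}, especially handling the boundary/degenerate cases — e.g. when $\deg(\alpha_0\beta_j) = \deg(\alpha_j\beta_0)$ exactly (a potential tie in $\deg(f_j)$, which must be ruled in or out), when $a_j = 0$ so that the $\alpha_j$ term's leading behavior changes, and ensuring the quadratic relation \eqref{eqn: f_j(1)} has no spurious extra integer roots beyond the two geometric candidates. I would also need to double-check that the $y = \lambda, \lambda^{-1}$ verification in the algorithm genuinely cannot reject the alternate pair in the pause cases — i.e. that three evaluation points plus a degree bound suffice to certify the identity of two polynomials of degree at most $n$ over $\mathbb{F}_q$ — which is where the choice $q > n$ is used.
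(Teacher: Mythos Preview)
There is a genuine gap in the forward direction. From $\deg(f_j)$ and $f_j(1)$ together with integrality and nonnegativity, you do \emph{not} get the exact conditions \eqref{Type1ProblemString}--\eqref{Type2ProblemString}. Concretely: suppose the true maximum is $\deg(\alpha_j\beta_0)=g_0^j+a_d$, and set $\Delta=g_0^j+a_d-g_{d-j}^d\ge 0$. The alternate candidate from the algorithm has $a_{d-j}'=a_{d-j}+\Delta$, and then \eqref{eqn: f_j(1)} forces $(a_d+1)(a_j-a_j')=\Delta$, so the alternate is a nonnegative integer pair iff $(a_d+1)\mid\Delta$ and $a_j\ge \Delta/(a_d+1)$. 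That only says $\Delta$ is a positive multiple of $a_d+1$, not that $\Delta=a_d+1$ (which is what \eqref{Type1ProblemString} asserts, since $g_0^j-g_{d-j}^d=1$ is equivalent to $\Delta=a_d+1$). You need the $\lambda$ and $\lambda^{-1}$ checks to cut this down to $m=\Delta/(a_d+1)=1$, and the paper uses them in an essential way: it expands both evaluations via the geometric-sum form of $\alpha_i,\beta_i$, obtains two multiplicative identities (one at $\lambda$, one at $\lambda^{-1}$), and then divides them so that the nontrivial factors cancel, leaving an equality of powers of $\lambda$. Primitivity of $\lambda$ in $\mathbb{F}_q$ with $q>n$ then lets one equate exponents, yielding the extra linear relation $2g_{d-j+1}^d+a_{d-j}+a_{d-j}'=2g_0^{j-1}+a_j+a_j'+a_d$. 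Solving this together with the $\deg(f_j)$ and $f_j(1)$ relations is what forces $m=1$ and produces the two explicit tuples.

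Your proposed mechanism for the $\lambda,\lambda^{-1}$ step --- ``three evaluation points plus a degree bound certify the polynomial identity'' --- does not work here: $f_j(y)$ has degree up to roughly $n-d$, and three points cannot determine a polynomial of that degree. The paper never appeals to interpolation; it exploits the special shape of $\alpha_i,\beta_i$ (sums of consecutive powers) so that the $\lambda$ and $\lambda^{-1}$ equations are \emph{related by a monomial factor}, which is what makes the division trick and the exponent comparison go through. For the converse, the paper simply verifies directly that the two explicit tuples yield the same polynomial $f_j(y)$ as polynomials in $y$ (a computation with geometric sums), rather than matching at finitely many points. You would need to replace your interpolation heuristic with either this direct verification or the paper's division argument.
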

\begin{proof}
The algorithm pauses at step $j$, if there exist two pairs of 2-tuples, say $(a_j,a_{d-j})$ and $(a_j',a_{d-j}')$ such that both of them satisfy equation~\eqref{eqn: f_j(y)} for $y=\lambda$ and $y = \lambda^{-1}$. That is, for both these pairs, the corresponding polynomials $(\alpha_j, \beta_j)$ and $(\alpha'_j, \beta'_j)$  respectively satisfy 
\begin{align}
\label{LamdaRelation} \nonumber f_j(\lambda) & = \alpha_0(\lambda)\beta_j(\lambda)+\alpha_j(\lambda)\beta_0(\lambda)  \\ 
&= \alpha'_0(\lambda)\beta'_j(\lambda)+\alpha'_j(\lambda)\beta'_0(\lambda);
\end{align}
\vspace{-10mm}
\begin{align}
\label{LambdaInverseRelation}    \nonumber f_j(\lambda^{-1}) 
 &= \alpha_0(\lambda^{-1})\beta_j(\lambda^{-1})+\alpha_j(\lambda^{-1})\beta_0(\lambda^{-1}) \\
 &=   \alpha'_0(\lambda^{-1})\beta'_j(\lambda^{-1})+\alpha'_j(\lambda^{-1})\beta'_0(\lambda^{-1}).     
\end{align}

By the step $j-1$, we know $g_0^{j-1}$ and $g_{d-j+1}^d$. Using equation~\eqref{eqn: deg(f_j)},
\begin{equation} \label{PauseRelation1}
  g_{d-j+1}^d+a_{d-j}' = deg (f_j) = g_0^{j-1}+a_j+a_d.  
\end{equation}

Using equation~\eqref{eqn: f_j(1)},
\begin{equation}\label{PauseRelation2}
\begin{aligned}
&(1+a_{d-j}) + (a_j+1)(a_d+1) \\ &= (1+a_{d-j}') + (a_j'+1)(a_d+1), \\ 
 \implies & (a_j-a_j')(a_d+1) = (a_{d-j}' - a_{d-j}).
\end{aligned}    
\end{equation}

From equation~\eqref{LamdaRelation},
\begin{align*}
    f_j(\lambda) \ = \ &\lambda^{g_{d-j+1}^d} \left( \sum_{i=0}^{a_{d-j}} \lambda^i \right) + \lambda^{g_{0}^{j-1}}\left( \sum_{i=0}^{a_j} \lambda^i \right)\left( \sum_{i=0}^{a_{d}} \lambda^i \right)\\
    = \ &\lambda^{g_{d-j+1}^d} \left( \sum_{i=0}^{a_{d-j}'} \lambda^i \right) + \lambda^{g_{0}^{j-1}}\left( \sum_{i=0}^{a_j'} \lambda^i \right)\left( \sum_{i=0}^{a_{d}} \lambda^i \right).
\end{align*}
Since $\lambda$ is a primitive element, equating the two expressions and multiplying by $(\lambda-1)^2$,

\begin{equation}\label{LambdaEquation}
\begin{aligned}
    \lambda^{g_{d-j+1}^d}(\lambda-1)(\lambda^{a_{d-j}+1}-\lambda^{a'_{d-j}+1})v \\ 
    =  \lambda^{g_{0}^{j-1}}(\lambda^{a_{d}+1}-1)(\lambda^{a_{j}'+1}-\lambda^{a_{j}+1}).
\end{aligned}    
\end{equation}

Similarly using equation~\eqref{LambdaInverseRelation}, and equating the expressions after multiplying by $(\lambda^{-1}-1)^2$; 
\begin{equation*}
\begin{aligned}
    \lambda^{-g_{d-j+1}^d}(\lambda^{-1}-1)(\lambda^{-a_{d-j}-1}-\lambda^{-a'_{d-j}-1}) \\
    = \lambda^{-g_{0}^{j-1}}(\lambda^{-a_{d}-1}-1)(\lambda^{-a_{j}'-1}-\lambda^{-a_{j}-1}).
\end{aligned}    
\end{equation*}
Simplifying, we get  
\begin{align*}
    &\lambda^{-g_{d-j+1}^d-a_{d-j}-a_{d-j}'-3}(\lambda-1)(\lambda^{a_{d-j}+1}-\lambda^{a'_{d-j}+1})\\ =\ &\lambda^{-g_{0}^{j-1}-a_j-a_j'-a_d-3} (\lambda^{a_{d}+1}-1)(\lambda^{a_{j}'+1}-\lambda^{a_{j}+1}).
\end{align*}
Now using relation~\eqref{LambdaEquation} and equating power of $\lambda$ (which can be done since $\lambda$ is primitive root in a field of size $p>n$)

\begin{equation}\label{PauseRelation3}
    2g_{d-j+1}^d+a_{d-j}+a_{d-j}' = 2g_{0}^{j-1}+a_j+a_j'+a_d.
\end{equation}

For simplicity, we write $ t= a_{d-j} -g_0^{j-1}$. Solving the four equations obtained from~\eqref{PauseRelation1},~\eqref{PauseRelation2}, and~\eqref{PauseRelation3}; we get
    \begin{align}
    \label{firstTuple}(a_j,a_{d-j}) &= (t+g_{d-j+1}^d +1,t + g_0^{j-1}), \\    
    \label{secondTuple} (a_j',a_{d-j}') &= (t+g_{d-j+1}^d,t+ g_0^{j-1}+a_d+1);
    \end{align}
    
where $t = deg(f_j) - ( 1 +a_d +g_{0}^{j-1}+g_{d-j+1}^{d})$.

The tuple $(a_j,a_{d-j})$ corresponds to the condition~\eqref{Type1ProblemString} and the tuple $(a_j',a_{d-j}')$ corresponds to the condition~\eqref{Type2ProblemString}. It is easy to verify that both these tuples indeed give the same polynomial $f_j$ therefore satisfying equation~\ref{eqn: f_j(y)}. Furthermore, this relation implies that $t$ is unique and if the reconstruction algorithm pauses at step $j$, then there are exactly two choices for the tuple.
\end{proof}

\begin{remark}
The above proof implies that if the algorithm \textit{pauses} at step $j$, then the two valid solutions for $(a_j,a_{d-j})$ exactly correspond to the conditions~\ref{Type1ProblemString} and ~\ref{Type2ProblemString}.  
\end{remark}

\begin{definition}\label{def: problem_strings}
We will call the strings which satisfy condition~\eqref{Type1ProblemString} for some $0<j<d/2$ as \textit{type-1 strings}, and the strings which satisfy condition~\eqref{Type2ProblemString} for some $0<j<d/2$ as \textit{type-2 strings}. 
\end{definition}

\begin{remark}\label{rem: WhenReconstruct}
A string can be a \textit{type-1 string}, a \textit{type-2 string}, both a \textit{type-1} and a \textit{type-2 string}, or be of neither type. Since our algorithm can only confuse a \textit{type-1 string} with a \textit{type-2 string}, if our algorithm knows the type of string, it can know which branch to choose thereby avoiding backtraking. In section~\ref{sec: Reconstruction Code}, we will use this fact to design reconstruction codes by avoiding all strings as a single type to be given as input.
\end{remark}

\begin{corollary} \label{cor: S(n)IsReconstructable}
If an \textit{imbalanced} string $s$ (definition~\ref{def: imbalanced_string}) of length $n$ is such that it begins in $1$ and ends at $0$,  then $s$ can be uniquely reconstructed in $O(n^2)$ time.
\end{corollary}

\begin{proof}
We will show that an \textit{imbalanced} string cannot be a \textit{type-1 string}. As discussed in the previous remark, telling our algorithm to always choose condition~\ref{Type2ProblemString} in case of a \textit{pause}, any such string can be reconstructed without backtracking and hence in $O(n^2)$ time. \\ 
Let if possible, $s$ also be a \textit{type-1 string}. Let step $j$ be the first time the string $s$ \textit{pauses} and satisfies  condition~\ref{Type1ProblemString}. If condition~\ref{Type1ProblemString} is satisfied, then the $j^{th}$ one in $s$ is at position $g_0^j + j$, and the $j^{th}$ last one in $s$ is at position $g_0^j + j-1$ from the end of the string. Therefore, 
\begin{equation}\label{eqn: negative_weight}
  wt \left( s_1^{g_0^j+j-1} \right) - wt\left( s_{n+1-g_{d-j}^d-j}^d \right) = (j-1) - j = -1.  
\end{equation}
But note that $wt(s_1^1) - wt(s_n^n) = 1 $. Consider the function $f(i) = wt(s_1^i) - wt(s_{n-i+1}^n)$. This function is such that $f(i+1) = f(i) \pm 1$. Therefore, the function must have been zero at some point, contradicting the fact that $s$ is \textit{imbalanced}.
\end{proof}

\begin{remark}\label{rem: no_backtracking_time_conclusion}
Corollary~\ref{cor: S(n)IsReconstructable} implies that our algorithm uniquely reconstructs the codewords of the reconstruction code $S_R(n)$ described in \cite{pattabiraman2023coding}  (revisited in section \ref{sec: PreviousWork}) without backtracking. In remark~\ref{rem: time_complexity} we showed the reconstruction algorithm presented in this paper has a worst-case time complexity of $O(n^2)$ when there is no backtracking compared to the reconstruction algorithm in \cite{acharya2015string} which has a time complexity of $O(n^2\log{n})$ .
\end{remark}

\begin{remark}
\label{rem: SameBacktracking}
If we define $I$ and $l_s$ as defined in \cite{acharya2015string}, that is
\begin{align*}
&I \ \myeq \   \{ i<n/2: w(s_1^i)=w(s_{n+1-i}) \text{ and } s_{i+1} \neq s_{n-i} \}, \\ &\text{ and } l_s = |I|;
\end{align*}
by proof of corollary~\ref{cor: S(n)IsReconstructable}, each time the string s \textit{pauses} at some step $j$, we have $i \in I$ with $i<j$. Therefore the number of branches in case of backtracking in our algorithm is less than or equal to $l_s$ which is the number of branches of the backtracking algorithm in \cite{acharya2015string}. Thus our algorithm is able to find $s$ before depth $l_s + 1$ and therefore for practical values of $n$, the time complexity of our algorithm is $O(2^{l_s}n^2)$ compared to the algorithm in \cite{acharya2015string} whose time complexity is $O(2^{l_s}n^{2}\log{n})$.
\end{remark}

\section{Reconstruction Code}\label{sec: Reconstruction Code}
In this section, we explicitly describe the reconstruction code $S(n)$ (definition~\ref{def: S(n)}) which will consist of all \textit{imbalanced} strings (definition~\ref{def: imbalanced_string}) of length $n$, beginning with $1$, and ending at $0$. The design of our reconstruction code is such that we avoid all strings satisfying condition ~\ref{Type1ProblemString} in our codebook. This will ensure that in case of a \textit{pause}, the reconstruction algorithm will know which branch to take. For a string to not be uniquely reconstructable, it must \textit{pause} at some step; therefore, avoiding \textit{pauses} ensures that the string is uniquely reconstructed from its composition multiset. Note that lemma~\ref{lem: S_R(n)_imbalance_equivalence} implies that the reconstruction code $S_R(n)$ (definition~\ref{def: SR(n)}) is the reverse of the reconstruction code $S(n)$. We show a bijection between $S(n)$ and positive $n$-step walks (defintion~\ref{def: walk}) thereby explicitly describing the code size and propose efficient procedures for mapping information message into this code and then retrieving them. The bounds on the redundancy are provided in corollary~\ref{cor: S(n)_redundancy}. Corollary~\ref{cor: S(n)IsReconstructable} ensures that the elements of $S(n)$ are uniquely reconstructable by our~\nameref{Reconstruction Algorithm} in $O(n^2)$ time. Recall that the elements of this codebook $S(n)$ are also reconstructable by the algorithm in \cite{acharya2015string} without backtracking (lemma~\ref{property}). The relevant background for this section is discussed in section~\ref{sec: PreviousWork}.\\
Later, we extend $S(n)$ by expanding codebooks of different sizes in certain specified ways followed by taking a union of them, in order to arrive at a new codebook $T(n)$. This codebook $T(n)$ contains $S(n)$, but also has strings that are not $\textit{imbalanced}$. The more general sufficient conditions for reconstruction in polynomial time of our algorithm (proposition~\ref{ProblemStringLemma}) ensure that elements of the codebook $T(n)$ can be reconstructed in $O(n^2)$ time. Finally, using the ideas discussed in remark~\ref{rem: WhenReconstruct}, we propose codebooks $T_1(n)$, and $T_{12}(n)$, through which we give computational bounds on the size of reconstruction codebooks uniquely reconstructable by the reconstruction algorithm in $O(n^2)$ time.

\begin{definition}\label{def: S(n)}
Define $S(n)$ to be the set of all \textit{imbalanced} binary strings of length $n$ beginning with $1$, and ending at $0$; that is for all prefix-suffix pairs of length $1 \le j \le n$, one has $wt(s_1^j) \neq wt(s_{n+1-j}^n)$.
\end{definition}
\begin{theorem}\label{thm: bijection}
There is a bijection between $S(n)$ and positive $n$-step walks (definition~\ref{def: walk}).
\end{theorem}
\begin{proof}
Given a binary string $s= s_1 \ldots s_n$, assign $X_i$'s in the following way:

\begin{equation*}
    X_{2i-1} = \begin{cases} \hspace{3mm}1, \text{ if } s_i = 1, \\ -1, \text{ if } s_i = 0; \end{cases} \hspace{4mm}\text{and } \hspace{4mm}
    X_{2i} = \begin{cases} -1, \text{ if } s_{n-i} = 1, \\ \hspace{3mm} 1, \text{ if } s_{n-i} = 0. \end{cases}
\end{equation*}
This assignment is uniquely invertible. That is, for each such $s$, there is a unique assignment of variables $X_j$ and vice versa. Now note that, $S_{2k} = \sum_{i=1}^{2k}X_i = 2(wt(s_1^k) - wt(s_{n-k+1}^n))$. Therefore, $S_{2k} = 0 \iff wt(s_1^k) = wt(s_{n-k+1}^n)$ implying the required bijection.
\end{proof}
\vspace{2mm}
The above result along with lemma~\ref{lem: NumberOfWalks} gives us the following corollary.

\begin{corollary}\label{cor: S(n)_redundancy}
The size of $S(n)$ is given by $ \binom{n-1}{\lfloor \frac{n-1}{2} \rfloor} \ge \frac{2^{n-\frac{1}{2}}}{\sqrt{\pi n}} $. Therefore, redundancy of the reconstruction code $S(n)$ is at most $\lceil 1/2 \log{n} + 1/2 + 1/2 \log_2{\pi} \rceil$.
\end{corollary}

\begin{remark}\label{rem: prefix-suffix-positive}
From the above proof, it is easy to see that for a string $s \in S(n)$, $(wt(s_1^k) - wt(s_{n-k+1}^n)) > 0$ for $1 \le k \le n/2$.
\end{remark}

The bijection in theorem~\ref{thm: bijection} also gives us a way of explicitly constructing the reconstruction code $S(n)$ i.e. mapping and retrieving information messages from the codebook elements. In the book \cite{feller1967introduction}, a $1$-dimensional random walk is interpreted as a "mountain range" with upstrokes, and downstrokes. Formally, for an assignment of $n$ variables $X_i \in \{-1,1\}$ for $1 \le i \le n$, the $1$-dimensional random walk is mapped to a lattice path beginning from origin, with the $i^{th}$ step size as $(1,X_i)$. Note that this construction maps a positive random walk to a lattice path that always stays above the $x$-axis. The book then provides a recipe to geometrically map a path from $(0,0)$ to $(2n,0)$ into $2n$ length paths from $(0,0)$ with all vertices strictly above or on the axis. That is, the positive $n$ step walks are explicitly mapped to the size of code $S(n)$ which is $\binom{n-1}{\lfloor \frac{n-1}{2} \rfloor}$ as stated in corollary~\ref{cor: S(n)_redundancy}. This mapping when merged with the bijection in the proof of theorem~\ref{thm: bijection} can be adapted to give us a procedure to explicitly bijectively map \textit{imbalanced} strings beginning with $1$ and ending at $0$ to the process of selection of some $\lfloor \frac{n-1}{2} \rfloor$ objects from $(n-1)$ objects. \\
In \cite{kabal2018combinatorial}, the author uses a coding trellis to give an efficient way of encoding/decoding combinatorial indices for a selection of items from a given set of items. Therefore, a combination of the procedures described in \cite{feller1967introduction}, and \cite{kabal2018combinatorial} can be used to define a map from integers in $[0,\binom{n-1}{\lfloor \frac{n-1}{2} \rfloor}-1]$ to the set of \textit{imbalanced} strings beginning with $1$, and ending at $0$. 
\vspace{2mm}\\

\begin{remark}
The result from \cite{ye2022reconstruction}, discussed in section~\ref{sec: PreviousWork} as lemma~\ref{lem: S_R(n)_imbalance_equivalence}, shows that the codebook $S_R(n)$ is the reverse of the codebook $S(n)$.   
\end{remark}
\begin{remark}
In \cite{ye2022reconstruction}, the authors show that the elements of the codebook $S(n)$ are also uniquely reconstructable from the multiset of their prefix-suffix compositions.
\end{remark}

Now, we finally extend our reconstruction code $S(n)$ by expanding codebooks of different sizes in certain specified ways followed by taking a union of them, in order to arrive at a new codebook $T(n)$. We define the following kinds of sets whose construction uses this $S(n)$. The reconstruction code $T(n)$ will be defined as the union of these sets. 

\begin{definition}\label{def: P_nk}
Given a positive integer $n$, and $2 \le k \le \lfloor n/2 \rfloor$, define $P_{k}$ as a set of binary strings of length $n$ which begin at $1$, end at $0$, as follows:
\begin{align}
\nonumber
    P_{k} = \{s \in \{0,1\}^n, \ t \in \{0,1\}^{k-2} \text{, such } \text{that }  & \ \ \\ \nonumber s_1^{k} =  1t0\text{, }s_{n-k+1}^{n} = 1t^{r}0 &  \\ 
      \qquad \hspace{.5cm}\text{ and } s_{k+1}^{n-k} \in S(n-2k) \}&,
\end{align}
where $t^r$ denotes the reverse of the string $t$.
\end{definition}

\begin{proposition}Given a binary string $s \in P_{k}$ of length $n$, with $2 \le k \le \lfloor n/2 \rfloor$; $s$ is uniquely reconstructable by our algorithm.
\end{proposition}

\begin{proof}
We will show that any $s \in P_{k}$ is not a \textit{type-1 string}, and therefore the result will from remark~\ref{rem: WhenReconstruct}. This proof will be similar to the proof of corollary~\ref{cor: S(n)IsReconstructable}. Consider the function $f(i) = wt(s_1^i) - wt(s_{n-i+1}^n)$. Note that, $$f(i)  \begin{cases} = 1, \text{ for } 1\le i \le k-1; \\ = 0, \text{ for }i = k; \\ >0, \text{ otherwise.} \end{cases}$$
The first two results follow from the construction of $P_{k}$ in definition~\ref{def: P_nk}, and the last inequality follows from remark~\ref{rem: prefix-suffix-positive}.
As seen in the equation~\ref{eqn: negative_weight}, in the proof of corollary~\ref{cor: S(n)IsReconstructable}; for every \textit{type-1 string}, there exists a $j'$, such that $f(j') = -1$, implying that $s \in P_{k}$ cannot be a  \textit{type-1 string}.    
\end{proof}

\begin{definition}\label{def: T(n)}
Define $T(n) = S(n) \bigcup \left(\bigcup_{k=1}^{n/2-1} P_{k}\right)$.
\end{definition}

\begin{remark}
    The extended codebook presented in the ISIT 2022 version of this paper \cite{gupta2022new} avoided \textit{type-2 strings} and was shown to be larger than $S(n)$ by a linear factor $41/40$. The codebook defined here avoids \textit{type-1 strings} and is shown to be larger than $S(n)$ be a linear factor of $9/8$.
\end{remark}

\begin{theorem}
Given $\epsilon>0$, there exists an $N \in \mathbb{N}$ such that for all integers $n>N$ we have 
\begin{equation}
  |T(n)| \ge \hspace{.2mm} \left(1.125 - \epsilon \right)|S(n)|.  
\end{equation}
\end{theorem}

\begin{proof}
Let $s_1 \in P_{k_1}$, and $s_2 \in P_{k_2}$ with $k_1 \neq k_2$. Then $$wt(s_{1}^{k_1}) - wt(s_{n+1-k_1}^{n}) = 0 \neq wt(s_{1}^{k_2}) - wt(s_{n+1-k_2}^{n}).$$ This means that $P_{k_1} \cap P_{k_2} = \Phi$. Now note that, 
\begin{align*}
    \frac{|T(n)|}{|S(n)|} = 1 + \sum_{k=2}^{\lfloor n/2 \rfloor} \frac{|P_{n,k}|}{|S(n)|} = 1+\sum_{k=2}^{\lfloor n/2 \rfloor} 2^{k-2}\frac{\binom{n-1-2k}{\lfloor \frac{n-1}{2}\rfloor - k}}{\binom{n-1}{\lfloor \frac{n-1}{2}\rfloor}}.
\end{align*}
Setting $n = 2n'+1$, we see that,
\begin{align*}
    \frac{|T(n)|}{|S(n)|} &= 1 + \sum_{k=2}^{n'} 2^{k-2} \frac{\binom{2n'-2k}{n'-k}}{\binom{2n'}{n'}} \\
    &= 1 + \sum_{k=2}^{n'} 2^{k-2} \left( 4^{-k} + \frac{k \cdot 4^{-k}}{2n'} + O(\frac{1}{n'^2})\right) \\
    &=1 + \frac{2n'(2^{n'}-3) + 3 \cdot 2^{n'}}{n'\cdot2^{n'+4}} + O(\frac{1}{n}) \\
    &\ge \frac{9}{8} + O(\frac{1}{n}).
\end{align*}
\end{proof}

As we discuss in remark~\ref{rem: WhenReconstruct}, our algorithm can only confuse a \textit{type-1 string} with a \textit{type-2 string}. Therefore, let $T_1(n)$ be the set of all binary strings of length $n$ beginning with $1$, and ending with $0$ with no \textit{type-1 strings}; that is all strings satisfying condition~\ref{Type1ProblemString} for any $1 \le j \le d$ are removed from the set of strings being considered. This means that the set $T_{1}(n)$ contains strings which are either only \textit{type-2}, or neither of the types. Then, for each element in $T_1(n)$, our algorithm even in case of a $\textit{pause}$ knows exactly which branch to take (the branch satisfying condition~\ref{Type2ProblemString}). Therefore, it uniquely reconstructs the string without backtracking, that is in $O(n^2)$ time complexity. \\
Extending this argument further, we define $S_{12}(n)$ to be the set of all binary strings of length $n$ beginning with $1$, and ending with $0$ with no strings that are both \textit{type-1} and \textit{type-2}. That is, all strings satisfying condition~\ref{Type1ProblemString} for some $1 \le j_1 \le d$, and satisfying condition~\ref{Type2ProblemString} for some $1 \le j_2 \le d$ are removed from the set of strings being considered. This means that the set $S_{12}(n)$ contains strings which are either only \textit{type-1}, only \textit{type-2}, or neither of the types. Note that, for our algorithm to know which branch to take, we will need to add an extra bit of redundancy, an indicator bit, to the elements of $S_{12}(n)$. This bit will indicate if the string being considered is \textit{type-2} or not. If the added bit is $1$, in case of a \textit{pause}, our algorithm will know that the string is \textit{type-2}, and take the branch corresponding to condition~\ref{Type2ProblemString}. If the added bit is $0$, in case of a \textit{pause}, our algorithm will know that the string is \textit{type-1}, and take the branch corresponding to condition~\ref{Type1ProblemString}, or continue without backtracking in the case of no \textit{pauses}. We define $T_{12}(n+1)$ to be the codebook of length $(n+1)$ where the codebook is formed by adding this indicator bit to the elements of $S_{12}(n)$. Note that, by construction we have the following relationship between the proposed codebooks, also represented in figure~\ref{fig: code_rel}:
\begin{equation}
    S(n) \subset T(n) \subset T_1(n) \subset T_{12}(n+1).
\end{equation}



\begin{figure}[H]
\centering
\begin{minipage}{.5\textwidth}
  \centering
  \includegraphics[width=.9\linewidth]{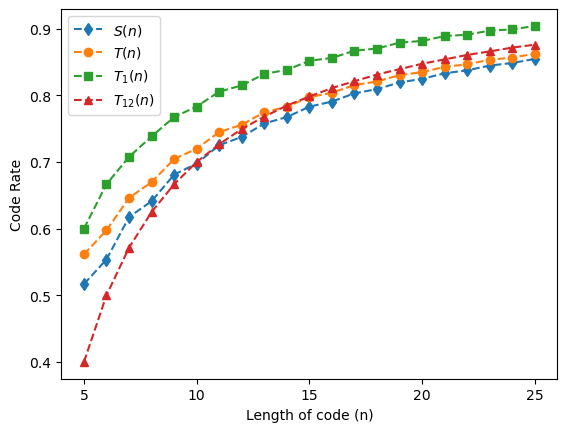}
  \caption{Comparison of code rates}\vspace{-2mm}
  \label{fig: code_rate}
\end{minipage}%
\begin{minipage}{.5\textwidth}
  \centering
      \includegraphics[width=.9\linewidth]{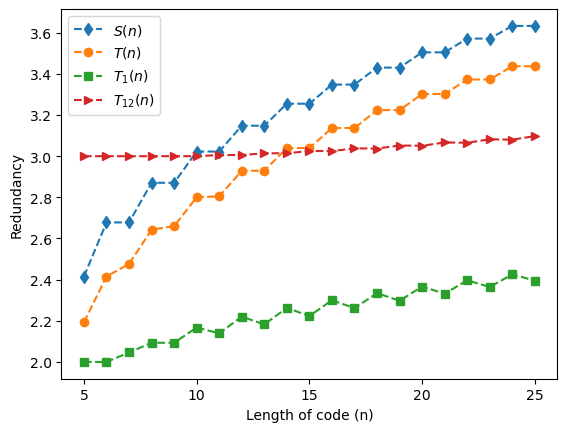}
   \caption{Comparison of code redundancies}\vspace{-2mm}
  \label{fig: code_redundancy}
\end{minipage}
\end{figure}

\begin{remark}
In figures~\ref{fig: code_rate} and ~\ref{fig: code_redundancy}, we present the code rates and code redundancies of the reconstruction codes $S(n)$ (defintion~\ref{def: S(n)}), T(n) (defintion~\ref{def: T(n)}), and codes $T_1(n)$, and $T_{12}(n)$ as described above. The time complexity for computing $T_{1}(n)$ and $T_{12}(n)$ is $O(n \cdot 2^n)$, and therefore the results are presented only for $n \le 25$.
\end{remark}



\section{Conclusion.}\label{sec: Conclusion}
Motivated by the problem of recovering polymer strings from their fragmented ions during mass spectrometry, we introduce a new algorithm to reconstruct a binary string from the multiset of its substring compositions. This algorithm takes a new algebraic approach, thereby improving the time complexity of reconstruction in the case of no backtracking to $O(n^2)$, as well as in cases where backtracking is needed. We further characterize algebraic properties of binary strings that guarantee reconstruction without backtracking thereby enlarging the space of binary strings uniquely reconstructable without backtracking compared with previously known algorithms. Additionally, we modify and extend the reconstruction code proposed in \cite{pattabiraman2019reconstruction} to produce a new reconstruction code which is linearly larger in size, and is uniquely reconstructable by our algorithm without backtracking. 

There are several combinatorial and coding-theoretic problems related to string reconstruction from substring composition that remain open. The problems of bounding the size of \textit{reconstruction codes} as well as constructing explicit schemes with minimum redundancy remain open. Our algorithm expands the conditions for strings to be uniquely reconstructed without backtracking, and therefore characterizing the set of strings uniquely reconstructable by the algorithm in this paper is a possible step in that direction. As seen from results in figure~\ref{fig: code_redundancy}, we believe that there exist reconstruction codes with constant redundancy that can be reconstructed efficiently. Furthermore, deriving bounds on time complexity of algorithms for reconstructing strings from their substring multiset is another problem of interest.     



\bibliographystyle{IEEEtran}
\bibliography{bibliography}



\end{document}